\newcommand*{\QUANTUM}{}%
\theoremstyle{plain}
\newtheorem{theorem}{Theorem}
\newtheorem{lemma}[theorem]{Lemma}
\newtheorem{rmk}[theorem]{Remark}
\theoremstyle{definition}
\theoremstyle{definition}
\theoremstyle{remark}
\newcommand{\abs}[1]{\left|#1\right|}
\newcommand{\eps}{\epsilon}
\newcommand{\CC}{\mathbb{C}}
\newcommand{\PP}{\mathbb{P}}
\renewcommand{\tt}{T_{\mathrm{total}}}
\newcommand{\tm}{T_{\mathrm{max}}}
\newcommand{\mo}[1]{\mathcal{O}\left(#1\right)}
\newcommand{\tmo}[1]{\tilde{\mathcal{O}}\left(#1\right)}
\newcommand{\ntp}[1]{\left|#1\right|_{2\pi}}
\begin{document}

\title{On low-depth algorithms for quantum phase estimation}

\author{Hongkang Ni} 
\affiliation{Institute for Computational and Mathematical Engineering, Stanford University,  Stanford, CA 94305} 
\orcid{0000-0002-7507-4755}
\email{hongkang@stanford.edu}

\author{Haoya Li}
\affiliation{Department of Mathematics, Stanford University, Stanford, CA 94305} 
\orcid{0000-0001-7076-7600}
\email{lihaoya@stanford.edu}

\author{Lexing Ying} 
\affiliation{Department of Mathematics, Stanford University, Stanford, CA 94305}
\affiliation{Institute for Computational and Mathematical Engineering, Stanford University,  Stanford, CA 94305} 
\orcid{0000-0003-1547-1457}
\email{lexing@stanford.edu}

\thanks{We thank Lin Lin for communicating the recent development of quantum phase estimation. The work of L.Y. is partially supported by the National Science Foundation under awards DMS-2011699 and DMS-2208163}

\begin{abstract}
Quantum phase estimation is one of the critical building blocks of quantum computing. For early fault-tolerant quantum devices, it is desirable for a quantum phase estimation algorithm to (1) use a minimal number of ancilla qubits, (2) allow for inexact initial states with a significant mismatch, (3) achieve the Heisenberg limit for the total resource used, and (4) have a diminishing prefactor for the maximum circuit length when the overlap between the initial state and the target state approaches one. In this paper, we prove that an existing algorithm from quantum metrology can achieve the first three requirements. As a second contribution, we propose a modified version of the algorithm that also meets the fourth requirement, which makes it particularly attractive for early fault-tolerant quantum devices. 
\end{abstract}

\keywords{Quantum phase estimation, early fault-tolerant quantum devices.}

\maketitle

\section{Introduction}\label{sec:intro}
Quantum phase estimation (QPE) is one of the most important building blocks of quantum computing. Consider a unitary matrix $U\in\mathbb{C}^{M\times M}$. Let $\{\ket{\psi_m}\}_{m=0}^{M-1}$ be the orthogonal eigenstates of $U$, and $\{e^{\mathrm{i}\lambda_m}\}_{m=0}^{M-1}$ be the corresponding eigenvalues. In the QPE problem, given $U$ and an eigenstate, say $\ket{\psi_0}$, the goal is to estimate $\lambda_0$ up to a certain accuracy. In a more general and practical setting, the initial quantum state $\ket{\psi}$ provided for phase estimation is a linear combination of eigenstates, i.e., $\ket{\psi}=\sum_{m=0}^{M-1}c_m\ket{\psi_m}$, where the coefficient $c_0$ for $\ket{\psi_0}$ dominates.

Due to the significance of QPE, many algorithms have been devised to address this problem. The well-known Hadamard test provides probably the simplest circuit (see \Cref{fig:circuits}(a)) for this purpose, but it requires $\mo{\eps^{-2}}$ executions to reach a precision $\eps$. Improving on the Hadamard test, Kitaev's algorithm \cite{kitaev1995quantum,kitaev2002classical} identifies the phase bit-by-bit by using quantum circuits with dyadic powers $U^{2^j}$ (see \Cref{fig:circuits}(b)). It reduces the total complexity but is only applicable to exact eigenstates. The QPE algorithm based on quantum Fourier transform (QFT) \cite{cleve1998quantum} requires only a single run but a relatively large number of ancilla qubits. Many alternatives for QPE have been proposed in the literature \cite{berry2015simulating,higgins2007entanglement,knill2007optimal,poulin2009sampling,o2019quantum,dong2022ground,lin2020near,lin2022heisenberg,ding2022even,zhang2022computing,wang2022quantum,wan2022randomized}. Among them, \cite{zhang2022computing,wang2022quantum} focus on the case of ground state energy estimation and a spectral gap is assumed. For a more comprehensive overview of the QPE algorithms, we refer to the detailed discussions in \cite{ding2022even,lin2022heisenberg,nielsen2001quantum}.

\begin{figure}
  \centering
  \begin{tabular}{cc}
    \includegraphics[scale=0.4]{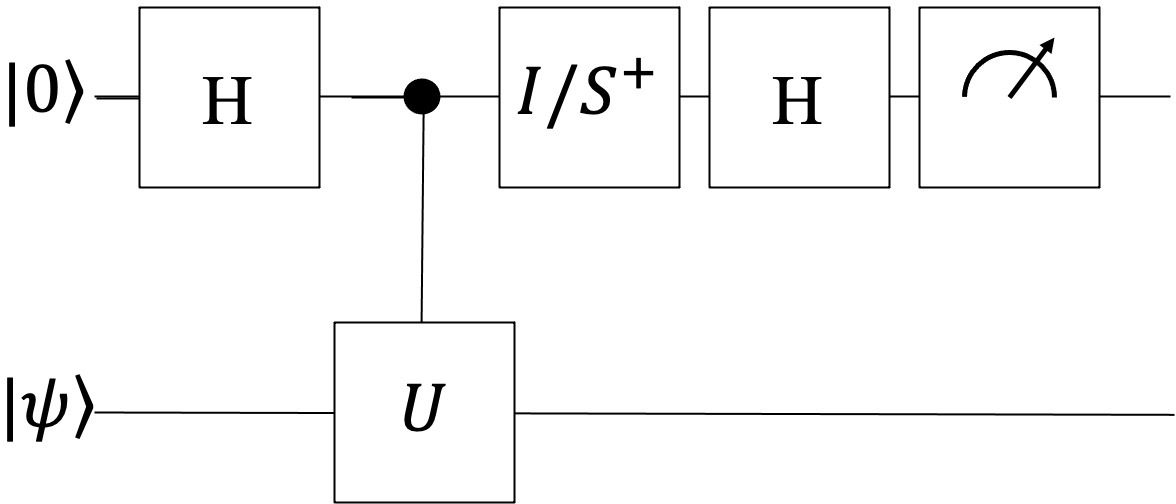} &
    \includegraphics[scale=0.4]{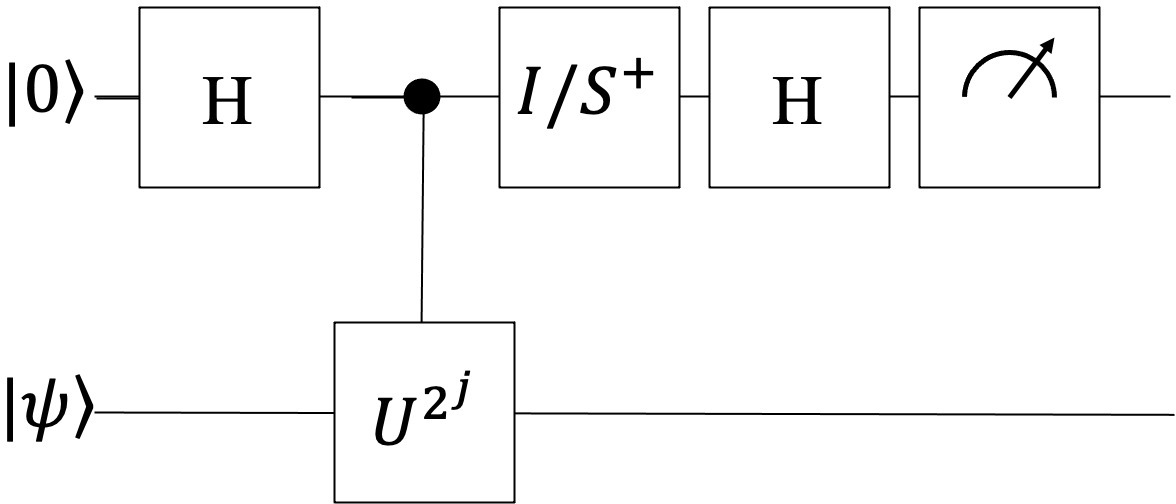}\\
    (a) & (b)
  \end{tabular}
  \caption{(a) The circuit for Hadamard test. $\mathrm{H}$ is the Hadamard gate. Concerning the $I/S^+$ gate, we use $I$ (the identity) for the real part of $\braket{\psi|U|\psi}$, and $S^+$ (the adjoint of the phase gate $S$) for the imaginary part. (b) The circuit used in the Kitaev algorithm estimates the real and imaginary parts of $\braket{\psi|U^{2^j}|\psi}$ for multiple integer values of $j$.  }
  \label{fig:circuits}
\end{figure}

There are several key complexity metrics for evaluating the performance of the QPE algorithms. The first is the number of ancilla qubits required; the smaller, the better. The second one is the maximum runtime $\tm$, measured by the maximum depth of any circuit used by the algorithm. The third one is the total runtime $\tt$, equal to the sum of the circuit depths over all executions. It has been demonstrated (see \cite{giovannetti2006quantum, zwierz2010general, zhou2018achieving} for example) that $\tt$ must follow the Heisenberg limit $\tt=\Omega(\eps^{-1})$. The fourth one is the minimal overlap $p_0 = |c_0|^2$ required between the initial state $\ket{\psi}$ and the target state $\ket{\psi_0}$. A lower bound of this overlap is usually assumed because the problem of finding $\lambda_0$ is otherwise shown to be difficult (\cite{kitaev2002classical, kempe2006complexity, aharonov2002quantum}). Among these metrics, a small $\tm$ is particularly important for early fault-tolerant quantum devices since these devices typically have a small number of qubits and a relatively short coherence time. 
For the phase estimation problem, one can use similar ideas to the proof of lower bound in \cite{huang2023learning} to show that $\tt = \Omega(\eps^{-2}\tm)$, which means $\tm=\Omega(\eps^{-1})$ when $\tt = \mo{\eps^{-1}}$. A detailed proof is given in \cite{yu2021}. In conclusion, a near-optimal phase estimation algorithm should meet the following requirements:
\begin{enumerate}
\item Use a small number of (even a single) ancilla qubits. \label{1}
\item Allow the initial state $\ket{\psi}$ to be inexact. \label{2}
\item Achieve the Heisenberg-limited scaling $\tt = \tmo{\eps^{-1}}$. \label{3}
\item $\tm = \mo{\eps^{-1}}$, and the prefactor can be arbitrarily small when $\ket{\psi}$ approaches to the exact eigenstate $\ket{\psi_0}$. As we mentioned, this is particularly important for early fault-tolerant quantum devices. \label{4}
\end{enumerate}
Most of the proposed QPE algorithms fail to meet all four requirements. For example, the first requirement is violated by QPE algorithms using QFT. The Hadamard test and the original Kitaev algorithm violate the second requirement. The Hadamard test also violates the third requirement. It has been pointed out in \cite{ding2022even} that the only algorithm proven to meet all four conditions is the recent optimization-based algorithm proposed in \cite{ding2022even}.

There have also been rapid developments of phase estimation in the related field of quantum metrology, such as \cite{kimmel2015robust,belliardo2020achieving,lumino2018experimental,rudinger2017experimental,russo2021evaluating}. For example, the robust phase estimation (RPE) algorithm in \cite{kimmel2015robust,belliardo2020achieving,russo2021evaluating} halves the confidence interval of the phase iteratively to achieve the target accuracy. In this paper,
\begin{itemize}
\item we show that this RPE algorithm satisfies the first three requirements listed above as long as the initial overlap is above $4-2\sqrt{3}\approx 0.536$. This is an improvement over the threshold $0.71$ obtained in \cite{ding2022even}, and
\item we propose a modified algorithm with a much shorter circuit length when the overlap approaches $1$. The prefactor of $\tm = \tmo{\eps^{-1}}$ can be as small as $\Theta(1-p_0)$, which is better than the bound $\Theta(\sqrt{1-p_0})$ provided in \cite{ding2022even}.
\end{itemize}
The rest of the paper is organized as follows. \Cref{sec:alg} proves the correctness of RPE when the initial overlap is above $4-2\sqrt{3}$. \Cref{sec:new} presents the new algorithm that allows for shorter circuit length when the initial overlap approaches one.

\section{Analysis of the existing algorithm}\label{sec:alg}

The angles and their computations are understood as modulo $2\pi$. The absolute value of an angle $\theta$, denoted by $\ntp{\theta}$, is defined to be the minimum distance to $0$ modulo $2\pi$, i.e., $\ntp{\theta} = \pi - |\theta \mod2\pi-\pi|$.  For any quantum state $\ket{\psi}$, denote by $p_m = |c_m|^2 = \abs{\braket{\psi|\psi_m}}^2$ the overlap between the given state and the eigenstate $\ket{\psi_m}$.

In the Hadamard test, the circuit in \Cref{fig:circuits}(a) is used to estimate the real and imaginary parts of $\braket{\psi|U|\psi}$. In Kitaev's algorithm, the circuits in \Cref{fig:circuits}(b) are used to estimate the real and imaginary parts of $\braket{\psi|U^{2^j}|\psi}$ for different values of $j$.
\Cref{alg:pruning} is a reformulation of the RPE outlined in \cite{belliardo2020achieving}, which uses the same primitives of the Kitaev's algorithm and produces an approximation of $\lambda_0$ with high probability. In the $j$-th step, the Hadamard test is executed $N_s$ times to obtain an estimate $Z_j$ of $2^j\lambda_0$. Then $\arg Z_j$ is used to get a candidate set $S_j$ of new approximations of $\lambda_0$. The one closest to the previous approximation $\theta_{j-1}$ is chosen as the new approximation $\theta_j$.

\begin{algorithm}[ht]
	\caption{An adapted version of RPE in \cite{belliardo2020achieving}}
	\label{alg:pruning}
	\begin{algorithmic}
		\STATE{\textbf{Input:} $\eps$: target accuracy, $\eta$: upper bound of the failure probability, $\delta$: upper bound for the noise in the initial state $\ket{\psi}$. }
		\STATE{Let $J = \lceil\log_2(\eps^{-1})\rceil$ and calculate $N_s$ with the values of $\eps$, $\eta$ and $\delta$ according to \eqref{eq:Ns}.}
		\STATE{$\theta_{-1} = 0$.}
		\FOR{$j = 0,1,\ldots,J$}
		\STATE{Run the circuit in \Cref{fig:circuits}(b) for the real part and imaginary part of $\braket{\psi|U^{2^j}|\psi}$ for $\frac{N_s}{2}$ times each to generate $Z_j$ as an estimation of $\braket{\psi|U^{2^j}|\psi}$.}
		\STATE{Define a candidate set $S_j = \left\{\frac{2k\pi+\arg Z_j}{2^j} \right\}_{k=0,\ldots,2^j-1}$.}
		\STATE{$\theta_j = \arg\min_{\theta\in S_j}\ntp{\theta - \theta_{j-1}}$. }	
		\ENDFOR
		\STATE{\textbf{Output:} $\theta_J$ as an approximation to $\lambda_0$. }
	\end{algorithmic}
\end{algorithm}
The following lemma is key to the analysis of \Cref{alg:pruning}. Here, the constant $\delta$ serves as an upper bound for the noise in the initial state, i.e., $\delta > p_1 + \ldots p_{M-1} $.
\begin{lemma}\label{lem:1}
	Suppose the constant $\delta < 2\sqrt{3}-3$ and let 
	\begin{equation}\label{eq:alpha}
	  \alpha(\delta) = \frac{\sqrt{3}}{2}(1-\delta) - \delta >0.
	\end{equation} 
	If the quantum state $\ket{\psi}$ satisfies $p_0 > 1-\delta$ and
	\begin{equation}
		\abs{Z_j - \braket{\psi|U^{2^j}|\psi}} < \alpha(\delta),
	\end{equation} 
	then 
	\begin{equation}
		2^j\lambda_0 \in \left(\arg Z_j-\frac{\pi}{3}, \arg Z_j+\frac{\pi}{3}\right)\mod 2\pi,\label{eq:range_j}
	\end{equation}
	where $\arg Z_j$ is the principal argument of $Z_j$.
\end{lemma}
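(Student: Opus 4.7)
The plan is to reduce the statement to a clean piece of plane geometry: control how close $Z_j$ is to the complex number $w := p_0 e^{\mathrm{i}\,2^j\lambda_0}$, then convert that distance bound into an angular bound using the tangent-from-origin picture.

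First I would expand the inner product in the eigenbasis:
\begin{equation*}
\braket{\psi|U^{2^j}|\psi} = p_0 e^{\mathrm{i}\,2^j\lambda_0} + R_j, \qquad R_j := \sum_{m=1}^{M-1} p_m e^{\mathrm{i}\,2^j\lambda_m}.
\end{equation*}
The triangle inequality together with the assumption $p_1+\cdots+p_{M-1}<\delta$ gives $|R_j|<\delta$. Combined with the hypothesis $|Z_j - \braket{\psi|U^{2^j}|\psi}|<\alpha(\delta)$ this yields
\begin{equation*}
|Z_j - w| < \alpha(\delta) + \delta = \tfrac{\sqrt{3}}{2}(1-\delta),
\end{equation*}
by the very definition of $\alpha(\delta)$ in \eqref{eq:alpha}. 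This is the one place where the particular form of $\alpha$ is used, and it is precisely what makes the later angular bound come out to $\pi/3$.

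Next I would convert this distance bound into an argument bound. Since $|w|=p_0 > 1-\delta$ and $\tfrac{\sqrt 3}{2}<1$, we have $|Z_j-w|<|w|$, so $Z_j$ lies in an open disk about $w$ that does not contain the origin; in particular $\arg Z_j$ is well defined. The largest possible angle $|\arg Z_j - \arg w|_{2\pi}$ is attained when the segment from $0$ to $Z_j$ is tangent to the circle $\{z:|z-w|=|Z_j-w|\}$, giving
\begin{equation*}
|\arg Z_j - \arg w|_{2\pi} < \arcsin\!\left(\frac{|Z_j-w|}{|w|}\right) < \arcsin\!\left(\frac{(\sqrt 3/2)(1-\delta)}{p_0}\right) \le \arcsin\!\frac{\sqrt 3}{2} = \frac{\pi}{3},
\end{equation*}
where the last inequality uses $p_0>1-\delta$. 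Since $\arg w = 2^j\lambda_0 \bmod 2\pi$, this is exactly the claim \eqref{eq:range_j}.

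There is no real obstacle; the only thing to be careful about is bookkeeping of the strict inequalities (so that $\pi/3$ appears as a strict bound, giving the open interval in \eqref{eq:range_j}) and the standard but easy to miss justification of the arcsine step by the tangent-line geometry, which requires verifying that the disk about $w$ does not touch the origin. The condition $\delta<2\sqrt{3}-3$ enters only to guarantee $\alpha(\delta)>0$, which is what makes the hypothesis of the lemma non-vacuous; the proof itself is uniform in $\delta$ below this threshold.
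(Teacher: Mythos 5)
Your proposal is correct and follows essentially the same route as the paper: bound $|Z_j - p_0 e^{\mathrm{i}2^j\lambda_0}|$ by $\alpha(\delta)+\delta = \tfrac{\sqrt{3}}{2}(1-\delta)$ via the triangle inequality, then use the tangent-line geometry to conclude that the sine of the angular deviation is below $\tfrac{\sqrt{3}}{2}$. You have merely spelled out explicitly the arcsine step and the non-vanishing of the disk at the origin, which the paper delegates to its figure.
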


\begin{proof}
	Direct calculation shows
	\begin{equation}
		\begin{aligned}
		  \alpha(\delta) &> \abs{Z_j - \braket{\psi|U^{2^j }|\psi}}	= \abs{Z_j - p_0 e^{\mathrm{i} 2^j \lambda_0} - \sum_{m = 1}^M p_m e^{\mathrm{i} 2^j \lambda_m}}\\
			&\ge \abs{Z_j - p_0 e^{\mathrm{i} 2^j \lambda_0}} - \sum_{m = 1}^M p_m  \ge \abs{Z_j - p_0 e^{\mathrm{i} 2^j \lambda_0}} - \delta,
		\end{aligned}
	\end{equation}
	which means $Z_j$ must be in a ball $B_{\alpha(\delta)+\delta}(p_0 e^{\mathrm{i} 2^j \lambda_0})  \subset\CC$. Noticing that $\alpha(\delta)+\delta = \frac{\sqrt{3}}{2}(1-\delta)$ and $p_0>1-\delta$, the sine of the angle between $Z_j$ and $e^{\mathrm{i} 2^j \lambda_0}$ is bounded by $\frac{\sqrt{3}}{2}$ and hence \eqref{eq:range_j} holds (see \Cref{fig:lemma}).
\end{proof}

\begin{figure}[ht!]
  \centering
  \includegraphics[scale=0.35]{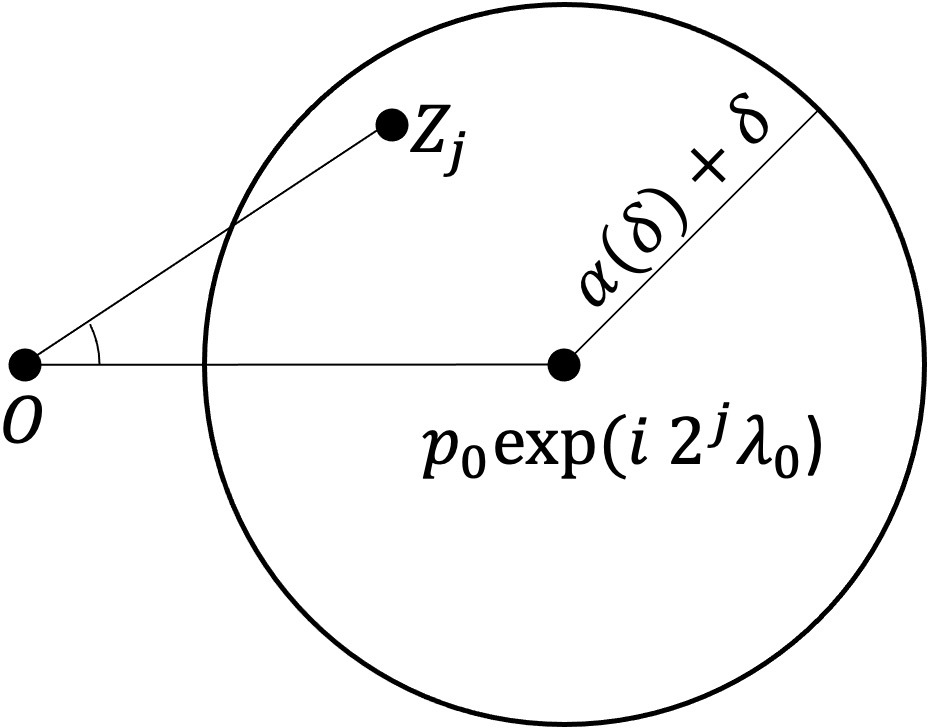}
  \caption{Illustration of the proof of \Cref{lem:1}.}
  \label{fig:lemma}
\end{figure}

The following theorem is the main theoretical guarantee of \Cref{alg:pruning}.

\begin{theorem}\label{thm:1}
Suppose the constant $\delta < 2\sqrt{3}-3$ and that the quantum state $\ket{\psi}$ satisfies $p_0 > 1-\delta > 4-2\sqrt{3} \approx 0.536$ and 
	\begin{equation}\label{eq:Ns}
	  N_s = 2\left\lceil\frac{4}{\alpha(\delta)^2}\left(\log\frac{4}{\eta}+\log\left(\left\lceil\log_2\frac{1}{\eps}\right\rceil+1\right)\right)\right\rceil, 
	\end{equation}
    where $\alpha(\delta)$ is defined in \eqref{eq:alpha}. 
	Then the output of \Cref{alg:pruning} satisfies
	\begin{equation}
		\PP\left(\ntp{\theta_J - \lambda_0} < \frac{\pi}{3}\eps\right) > 1-\eta.
	\end{equation}
	In addition, the maximal runtime and the total cost of \Cref{alg:pruning} are, respectively,
	\begin{equation}
		\tm = \mo{\eps^{-1}},\quad \tt = \mo{\eps^{-1}\left(\log(\eta^{-1})+\log\log(\eps^{-1})\right)}.
	\end{equation}
\end{theorem}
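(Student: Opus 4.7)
The plan is to combine a Hoeffding-type concentration bound on the estimates $Z_j$ with an inductive use of \Cref{lem:1}. First, each execution of the circuit in \Cref{fig:circuits}(b) for the real (resp.\ imaginary) part yields a $\pm 1$ outcome whose expectation is $\Re\braket{\psi|U^{2^j}|\psi}$ (resp.\ $\Im\braket{\psi|U^{2^j}|\psi}$). Averaging $N_s/2$ samples for each part and applying Hoeffding's inequality gives
\[
\PP\left(\abs{\Re Z_j - \Re\braket{\psi|U^{2^j}|\psi}} > \frac{\alpha(\delta)}{\sqrt{2}}\right) \le 2\exp\left(-\frac{N_s\,\alpha(\delta)^2}{8}\right),
\]
and likewise for the imaginary part. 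The choice of $N_s$ in \eqref{eq:Ns} is calibrated so that each such probability is at most $\eta/(4(J+1))$, so a union bound over $j\in\{0,\ldots,J\}$ and over real/imaginary parts yields that the event
\[
\mathcal{E} := \left\{\,\abs{Z_j - \braket{\psi|U^{2^j}|\psi}} < \alpha(\delta) \text{ for all } 0 \le j \le J\,\right\}
\]
holds with probability at least $1-\eta$.

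Next, I would prove by induction on $j$ that, on the event $\mathcal{E}$, the approximation satisfies $\ntp{\theta_j - \lambda_0} < \pi/(3\cdot 2^j)$. The base case $j=0$ is immediate from \Cref{lem:1}, since $S_0 = \{\arg Z_0\}$ is a singleton, so $\theta_0 = \arg Z_0$ and $\ntp{\theta_0 - \lambda_0} < \pi/3$. For the inductive step, \Cref{lem:1} guarantees existence of some $\theta^*\in S_j$ with $\ntp{\theta^* - \lambda_0} < \pi/(3\cdot 2^j)$. The inductive hypothesis together with the triangle inequality gives $\ntp{\theta_{j-1}-\theta^*} < \pi/(3\cdot 2^{j-1}) + \pi/(3\cdot 2^j) = \pi/2^j$, whereas any other candidate $\theta'\in S_j\setminus\{\theta^*\}$ satisfies $\ntp{\theta'-\theta^*}\ge 2\pi/2^j$ and hence $\ntp{\theta_{j-1}-\theta'} \ge 2\pi/2^j - \pi/2^j = \pi/2^j$. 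Consequently the $\arg\min$ selection in \Cref{alg:pruning} picks $\theta_j = \theta^*$, completing the induction. Specializing to $j=J$ and using $2^J \ge \eps^{-1}$ gives $\ntp{\theta_J - \lambda_0} < \pi\eps/3$ on $\mathcal{E}$, which proves the accuracy claim.

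The complexity bounds are then immediate: the circuit at step $j$ has depth $\mo{2^j}$, so $\tm = \mo{2^J} = \mo{\eps^{-1}}$, and the total depth is $\mo{N_s\sum_{j=0}^J 2^j} = \mo{N_s\,\eps^{-1}} = \mo{\eps^{-1}(\log\eta^{-1}+\log\log\eps^{-1})}$ by \eqref{eq:Ns}. The main obstacle is the inductive step, where one must verify that the spacing $2\pi/2^j$ between candidates in $S_j$ strictly exceeds the sum of uncertainties coming from the previous step and the current Lemma; the $1/3$ slack in \Cref{lem:1} (rather than the naive $1/2$) is exactly what makes the arithmetic $\pi/(3\cdot 2^{j-1}) + \pi/(3\cdot 2^j) = \pi/2^j < 2\pi/2^j$ close off. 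Some care with modular arithmetic in the definition of $\ntp{\cdot}$ is also needed to ensure that the pruning step is well-posed across the branch cut of $\arg$.
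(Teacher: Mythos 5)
Your proposal is correct and follows essentially the same route as the paper: a Hoeffding-plus-union-bound argument to guarantee $\abs{Z_j-\braket{\psi|U^{2^j}|\psi}}<\alpha(\delta)$ for all $j$ simultaneously, followed by an induction in which \Cref{lem:1} localizes $\lambda_0$ near one candidate in $S_j$ and the $\frac{\pi}{3\cdot 2^{j-1}}+\frac{\pi}{3\cdot 2^j}=\frac{\pi}{2^j}<\frac{2\pi}{2^j}$ gap calculation forces the $\arg\min$ to select it. Your triangle-inequality phrasing of the pruning step and the explicit real/imaginary split in the Hoeffding bound are just more detailed renderings of what the paper states via interval intersections, so no substantive difference remains.
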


\begin{proof}
  First, we claim that under the circumstances of
  \begin{equation}
	\abs{Z_j - \braket{\psi|U^{2^j}|\psi}} < \alpha(\delta),\quad \forall j = 0,1,\ldots,J,\label{eq:cond}
  \end{equation}
  $\ntp{\theta_J - \lambda_0} < \frac{\pi}{3}\eps$. In fact, we can show
  \begin{equation}
	\lambda_0 \in \left(\theta_j-\frac{\pi}{3\cdot 2^j},  \theta_j+\frac{\pi}{3\cdot 2^j}\right) \mod 2\pi,
  \label{eq:int}
  \end{equation} 
  by induction. The case $j=0$ is a direct consequence of \Cref{lem:1}. Now if \eqref{eq:int} holds for $j-1$, then \Cref{lem:1} states that $\lambda_0$ is in one of the intervals $I_k = \left(\frac{2k\pi +\arg Z_j-\pi/3}{2^j}, \frac{2k\pi +\arg Z_j+\pi/3}{2^j}\right)$ for $k=0,\ldots,2^j-1$. Checking the length of the gaps between these intervals shows that only one such interval $I_{k_*}$ with $k_* = \arg\min_k \ntp{\frac{2k\pi +\arg Z_j}{2^j} - \theta_{j-1}}$ can have a non-empty intersection with $\left(\theta_{j-1}-\frac{\pi}{3\cdot 2^{j-1}}, \theta_{j-1}+\frac{\pi}{3\cdot 2^{j-1}}\right)$ (see \Cref{fig:thm}). Recall that this is exactly the criteria for choosing $\theta_j$. Hence $\lambda_0\in I_{k_*} = \left(\theta_j-\frac{\pi}{3\cdot 2^j}, \theta_j+\frac{\pi}{3\cdot 2^j}\right)$.
	
  It remains now to prove that \eqref{eq:cond} holds with a probability greater than $1-\eta$. By Hoeffding's inequality, using $N_s$ samples ensures that
  \begin{equation}
	\PP\left(\abs{Z_j - \braket{\psi|U^{2^j}|\psi}} < \alpha(\delta)\right) > 1-\frac{\eta}{J+1}
  \end{equation}
  for every $j$. Therefore, we conclude that
  \begin{equation}
	\PP\left(\abs{Z_j - \braket{\psi|U^{2^j}|\psi}} < \alpha(\delta) \text{ hold for every } j = 0,1,\ldots,J\right) > 1-\eta.
  \end{equation}
\end{proof}

\begin{figure}[ht!]
  \centering
  \includegraphics[scale=0.4]{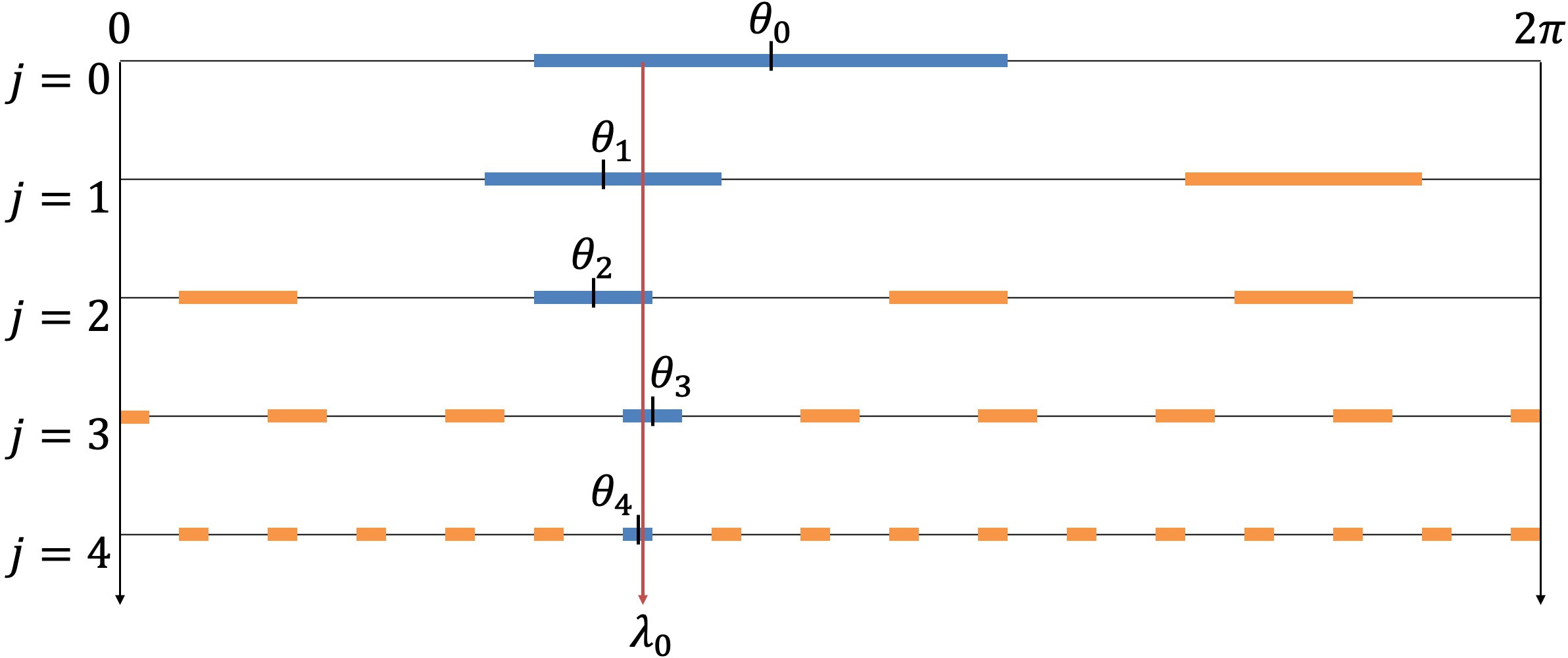} 
  \caption{Illustration of \Cref{alg:pruning} and \Cref{thm:1}. At each iteration $j$, the blue interval is the neighborhood of the chosen $\theta_j$ with length $\frac{\pi}{3\cdot2^{j-1}}$. The yellow ones are discarded.}
  \label{fig:thm}
\end{figure}

\section{Low-depth circuit for large initial overlap}\label{sec:new}

This section shows that the maximal runtime can be further reduced when the initial overlap $p_0$ is close to one (i.e., the upper bound $\delta$ of the noise in the initial state $\ll 1$). This is achieved via \Cref{alg:new}.

\begin{algorithm}[ht]
	\caption{Phase estimation in the regime of large overlap}
	\label{alg:new}
	\begin{algorithmic}
		\STATE{\textbf{Input:} $\eps$: target accuracy, $\eta$: upper bound of the failure probability,  $\delta$: upper bound for the noise in the initial state $\ket{\psi}$, $\xi$: additional prefactor for the maximal runtime that must satisfy $1>\xi>\frac{3}{\pi}\arcsin((1-\delta)^{-1}\delta)$. }
		\STATE{Let $J = \lceil\log_2(\frac{\xi}{\eps})\rceil$ and calculate $N_s$ according to the values of the inputs and \eqref{eq:Ns_new}.}
		\STATE{$\theta_{-1} = 0$.}
		\FOR{$j = 0,1,\ldots,J$}
			\STATE{Run the circuit in \Cref{fig:circuits}(b) for the real part and imaginary part of $\braket{\psi|U^{2^j}|\psi}$ for $\frac{N_s}{2}$ times each to generate $Z_j$ as an estimation of $\braket{\psi|U^{2^j}|\psi}$.}
			\STATE{Define a candidate set $S_j = \left\{\frac{2k\pi+\arg Z_j}{2^j}\right\}_{k=0,\ldots,2^j-1}$.}
			\STATE{$\theta_j = \arg\min_{\theta\in S_j}\ntp{\theta - \theta_{j-1}}$. }	
		\ENDFOR
		\STATE{\textbf{Output:} $\theta_J$ as an approximation to $\lambda_0$. }
	\end{algorithmic}
\end{algorithm}

In order to analyze the complexity of \Cref{alg:new}, we need a generalization of \Cref{lem:1} as provided below.
\begin{lemma}\label{lem:2}
Suppose that $\delta < 2\sqrt{3}-3$ and $1>\xi>\frac{3}{\pi}\arcsin((1-\delta)^{-1}\delta)$. Let 
	\begin{equation}
	  \beta(\delta, \xi) = (1-\delta)\sin\frac{\pi\xi}{3} - \delta >0.
	\end{equation} 
	If the quantum state $\ket{\psi}$ satisfies $p_0 > 1-\delta$ and an estimator $Z_j$ satisfies
	\begin{equation}
		\abs{Z_j - \braket{\psi|U^{2^j}|\psi}} < \beta(\delta, \xi),
	\end{equation} 
	then 
	\begin{equation}
		2^j\lambda_0 \in \left(\arg Z_j-\frac{\pi\xi}{3}, \arg Z_j+\frac{\pi\xi}{3}\right)\mod 2\pi.\label{eq:range_j_new}
	\end{equation}
\end{lemma}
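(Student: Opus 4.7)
The plan is to replicate the geometric argument from the proof of \Cref{lem:1} almost verbatim, with the angle $\pi/3$ replaced by $\pi\xi/3$ throughout. First I would apply the same triangle inequality chain: writing $\braket{\psi|U^{2^j}|\psi} = p_0 e^{\mathrm{i} 2^j \lambda_0} + \sum_{m\ge 1} p_m e^{\mathrm{i} 2^j \lambda_m}$ and using $\sum_{m\ge 1} p_m < \delta$ together with the hypothesis $\abs{Z_j - \braket{\psi|U^{2^j}|\psi}} < \beta(\delta,\xi)$, one obtains
\begin{equation}
\abs{Z_j - p_0 e^{\mathrm{i} 2^j \lambda_0}} < \beta(\delta,\xi) + \delta = (1-\delta)\sin\frac{\pi\xi}{3}.
\end{equation}

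Next I would repeat the picture underlying \Cref{fig:lemma}: $Z_j$ lies in a disk of radius $(1-\delta)\sin(\pi\xi/3)$ centered at $p_0 e^{\mathrm{i} 2^j \lambda_0}$, and this center is at distance $p_0 > 1-\delta$ from the origin along the ray of argument $2^j\lambda_0$. The sine of the angle subtended at the origin by any point of this disk, relative to the ray $e^{\mathrm{i} 2^j \lambda_0}$, is at most the radius-to-distance ratio, which is strictly less than $(1-\delta)\sin(\pi\xi/3)/(1-\delta) = \sin(\pi\xi/3)$. Since the constraint $\xi < 1$ ensures $\pi\xi/3 < \pi/2$, the arcsine is monotone on the relevant range, so the angle between $Z_j$ and $e^{\mathrm{i} 2^j \lambda_0}$ is strictly less than $\pi\xi/3$, which is exactly \eqref{eq:range_j_new}.

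The only place where the new hypothesis $\xi > \frac{3}{\pi}\arcsin((1-\delta)^{-1}\delta)$ enters is in guaranteeing that $\beta(\delta,\xi) > 0$: rearranging that inequality yields $\sin(\pi\xi/3) > \delta/(1-\delta)$, i.e., $(1-\delta)\sin(\pi\xi/3) > \delta$, so that the condition $\abs{Z_j - \braket{\psi|U^{2^j}|\psi}} < \beta(\delta,\xi)$ is non-vacuous and the triangle inequality above is meaningful. No real obstacle should arise; the main thing to be careful about is tracking strict vs.\ non-strict inequalities and confirming that both ends of the $\xi$ range ($\xi < 1$ for monotonicity of $\arcsin$, and the lower bound for positivity of $\beta$) are exactly what makes the Lemma 1 geometric argument go through with the new angle $\pi\xi/3$.
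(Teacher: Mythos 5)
Your proposal is correct and follows essentially the same argument as the paper's proof: the same triangle-inequality chain placing $Z_j$ in the ball $B_{(1-\delta)\sin\frac{\pi\xi}{3}}(p_0 e^{\mathrm{i} 2^j \lambda_0})$, followed by the same angle-subtended-by-a-disk observation. Your added remarks on why $\xi<1$ keeps the arcsine monotone and why the lower bound on $\xi$ makes $\beta(\delta,\xi)>0$ are correct and only make the paper's terse argument more explicit.
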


\begin{proof}
	Similar with the proof of \Cref{lem:1}, we have
	\begin{equation}
		\begin{aligned}
		  \beta(\delta, \xi) &> \abs{Z_j - \braket{\psi|U^{2^j}|\psi}}	= \abs{Z_j - p_0 e^{\mathrm{i} 2^j \lambda_0} - \sum_{m = 1}^M p_m e^{\mathrm{i} 2^j \lambda_m}}\\
			&\ge \abs{Z_j - p_0 e^{\mathrm{i} 2^j \lambda_0}} - \sum_{m = 1}^M p_m  \ge \abs{Z_j - p_0 e^{\mathrm{i} 2^j \lambda_0}} - \delta.
		\end{aligned}
	\end{equation}
	Hence $Z_j$ is in a ball $B_{\beta(\delta, \xi)+\delta}(p_0 e^{\mathrm{i} 2^j \lambda_0}) = B_{(1-\delta)\sin\frac{\pi\xi}{3}}(p_0 e^{\mathrm{i} 2^j \lambda_0})$. Since $p_0>1-\delta$, the sine of the angle between $Z_j$ and $e^{\mathrm{i} 2^j \lambda_0}$ is upper-bounded by $\sin\frac{\pi\xi}{3}$ and thus \eqref{eq:range_j_new} holds.
\end{proof}

\begin{rmk}\label{rmk:smalldelta}
    If we know a priori that the overlap between $\ket{\psi}$ and $\ket{\psi_0}$ is large, i.e., $\delta \ll 1$, we have $\frac{3}{\pi}\arcsin((1-\delta)^{-1}\delta)=\Theta(\delta)$. Thus, the constraint on $\xi$ enforced in \Cref{lem:2} is $\xi=\Omega(\delta)$ in the regime of large overlap. 
\end{rmk}

Here, we generalize the result presented in \Cref{thm:1} and show that \Cref{alg:new} can give a further reduced $\tm$. Notice that even though the assumption on $\delta$ still reads $\delta < 2\sqrt{3}-3$, \Cref{thm:2} only provides substantial reduction on $\tm$ when $\delta$ is sufficiently small. 

\begin{theorem}\label{thm:2}
	Assume that the quantum state $\ket{\psi}$ satisfies $p_0 > 1-\delta $, where $\delta < 2\sqrt{3}-3$. Suppose that $1>\xi>\frac{3}{\pi}\arcsin((1-\delta)^{-1}\delta)$, and that the parameter $N_s$ in \Cref{alg:new} is given by: 
	\begin{equation}\label{eq:Ns_new}
	  N_s = 2\left\lceil\frac{4}{\beta(\delta,\xi)^2}\left(\log\frac{4}{\eta}+\log\left(\left\lceil\log_2\frac{\xi}{\eps}\right\rceil+1\right)\right)\right\rceil.
	\end{equation}
	Then the output of \Cref{alg:new} satisfies
	\begin{equation}
		\PP\left(\ntp{\theta_J - \lambda_0} < \frac{\pi}{3}\eps\right) > 1-\eta.
	\end{equation}
	In addition, the maximal runtime and the total cost of \Cref{alg:new} are, respectively,
	\begin{equation}
		\tm = \mo{\xi\eps^{-1}},\quad \tt = \mo{\frac{\xi\eps^{-1}}{\beta(\delta, \xi)^2}\left(\log(\eta^{-1})+\log\log(\eps^{-1})\right)}.
	\end{equation}
\end{theorem}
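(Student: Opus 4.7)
The plan is to follow the same inductive scheme as in the proof of \Cref{thm:1}, replacing \Cref{lem:1} by \Cref{lem:2} at every step and tracking how the parameter $\xi$ propagates through the analysis. Concretely, I would first show by induction on $j$ that, conditioned on the event
\begin{equation}
    \mathcal{E} := \left\{\abs{Z_j - \braket{\psi|U^{2^j}|\psi}} < \beta(\delta,\xi) \text{ for all } j = 0,1,\ldots,J\right\},
\end{equation}
the shrinking confidence interval
\begin{equation}
    \lambda_0 \in \left(\theta_j - \frac{\pi\xi}{3\cdot 2^j},\, \theta_j + \frac{\pi\xi}{3\cdot 2^j}\right) \mod 2\pi
\end{equation}
holds at every iteration. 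The base case $j=0$ is immediate from \Cref{lem:2}, since $S_0 = \{\arg Z_0\}$ forces $\theta_0 = \arg Z_0$.

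For the inductive step, assume the claim at level $j-1$. \Cref{lem:2} implies that $\lambda_0$ lies in one of the $2^j$ candidate intervals $I_k = \bigl(\frac{2k\pi + \arg Z_j - \pi\xi/3}{2^j},\, \frac{2k\pi + \arg Z_j + \pi\xi/3}{2^j}\bigr)$. The main (and only nontrivial) point is to verify that exactly one $I_k$ can intersect the previous confidence interval around $\theta_{j-1}$, so that the closest-point selection rule in \Cref{alg:new} automatically picks the correct $k_*$. Consecutive intervals $I_k$ are spaced by $2\pi/2^j$; each $I_k$ has width $2\pi\xi/(3\cdot 2^j)$, while the inductive interval has width $2\pi\xi/(3\cdot 2^{j-1})$. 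Adding these two widths yields $2\pi\xi/2^j$, which is strictly less than the period $2\pi/2^j$ precisely because $\xi < 1$; this is where the upper bound on $\xi$ is consumed. The selection rule then forces $\lambda_0 \in I_{k_*}$ with $\theta_j$ as the midpoint of $I_{k_*}$, completing the induction.

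Next, I would control the probability of $\mathcal{E}$. By Hoeffding's inequality applied separately to the real and imaginary Hadamard-test estimates, the choice of $N_s$ in \eqref{eq:Ns_new} gives $\PP\bigl(\abs{Z_j - \braket{\psi|U^{2^j}|\psi}} < \beta(\delta,\xi)\bigr) > 1 - \eta/(J+1)$ for each $j$, so a union bound over $j=0,\ldots,J$ yields $\PP(\mathcal{E}) > 1-\eta$. Evaluating the inductive conclusion at $j=J$ with $J = \lceil \log_2(\xi/\eps)\rceil$ gives $\ntp{\theta_J - \lambda_0} < \pi\xi/(3\cdot 2^J) \leq \pi\eps/3$ on $\mathcal{E}$, which is the claimed accuracy.

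Finally, the runtime bounds follow from the geometric series: $\tm = 2^J = \mo{\xi/\eps}$, and $\tt = N_s \sum_{j=0}^{J} 2^j = \mo{N_s\, \xi/\eps}$, which upon substituting \eqref{eq:Ns_new} (noting $\log\log(\xi/\eps) = \mo{\log\log(\eps^{-1})}$ since $\xi \leq 1$) produces the stated bound. The main obstacle, such as it is, lies in the interval-uniqueness bookkeeping of the inductive step, since this is where the hypothesis $\xi < 1$ is actually used; everything else is a routine parametric rewrite of the proof of \Cref{thm:1}.
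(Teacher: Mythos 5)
Your proposal is correct and follows essentially the same route as the paper's own proof: induction on $j$ with \Cref{lem:2} in place of \Cref{lem:1}, the width-versus-spacing count showing that $\xi<1$ forces at most one candidate interval $I_k$ to meet the previous confidence interval, and Hoeffding plus a union bound for the probability of the good event. Your explicit accounting of where the hypothesis $\xi<1$ is consumed is a slightly more detailed version of the step the paper dispatches by reference to the argument of \Cref{thm:1}, but it is the same argument.
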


\begin{figure}[ht!]
  \centering
  \includegraphics[scale=0.4]{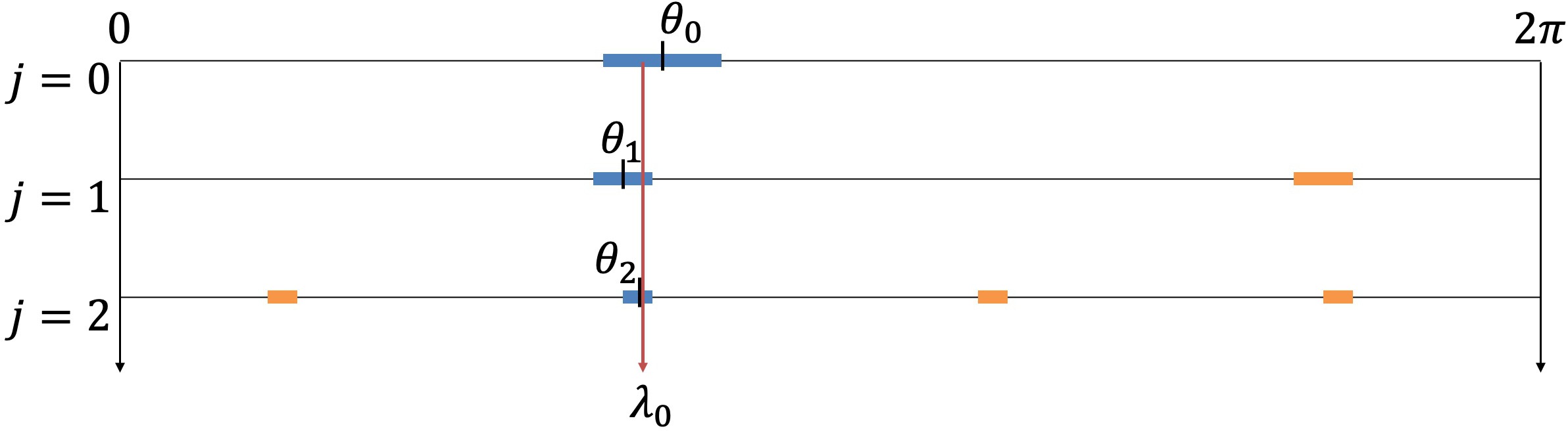} 
  \caption{Illustration of \Cref{alg:new} and \Cref{thm:2}. Compared to \Cref{alg:pruning}, a short interval is maintained at each level, and thus, the maximum value $J$ is smaller even for the same precision $\eps$, leading to a smaller $\tm$.}
  \label{fig:thmnew}
\end{figure}

\begin{proof}
  Similar to \Cref{thm:1}, we first prove that under the circumstance of
  \begin{equation}
	\abs{Z_j - \braket{\psi|U^{2^j}|\psi}} < \beta(\delta, \xi),\quad \forall j = 0,1,\ldots,J,\label{eq:cond_new}
  \end{equation}
  $\ntp{\theta_J - \lambda_0} < \frac{\pi}{3}\eps$. We proceed to show
  \begin{equation}
	\lambda_0 \in \left(\theta_j-\frac{\pi\xi}{3\cdot 2^j},  \theta_j+\frac{\pi\xi}{3\cdot 2^j}\right) \mod 2\pi, \label{eq:int_new}
  \end{equation} 
  by induction. The case $j=0$ is a direct corollary of \Cref{lem:2}. Now assume that \eqref{eq:int_new} holds for $j-1$. It is guaranteed by \Cref{lem:2} that $\lambda_0$ is in one of the intervals $I_k = \left(\frac{2k\pi+\arg Z_j-\pi\xi/3}{2^j}, \frac{2k\pi+\arg Z_j+\pi\xi/3}{2^j}\right)$ for $k=0,\ldots,2^j-1$. Since $\xi<1$, the same argument as in \Cref{thm:1} shows that at most one such interval $I_{k_*}$ with $k_* = \arg\min_k \ntp{\frac{2k\pi+\arg Z_j}{2^j} - \theta_{j-1}}$ can have a non-empty intersection with $\left(\theta_{j-1}-\frac{\pi\xi}{3\cdot 2^{j-1}}, \theta_{j-1}+\frac{\pi\xi}{3\cdot 2^{j-1}}\right)$ (see \Cref{fig:thmnew}). Since this is the criteria for choosing $\theta_j$ in the algorithm, we have $\lambda_0\in I_{k_*} = \left(\theta_j-\frac{\pi\xi}{3\cdot 2^j}, \theta_j+\frac{\pi\xi}{3\cdot 2^j}\right)$. Plugging in $J = \lceil\log_2(\frac{\xi}{\eps})\rceil$ yields $\ntp{\theta_J - \lambda_0} < \frac{\pi}{3}\eps$.

  Now we show that \eqref{eq:cond_new} holds with a probability greater than $1-\eta$. In fact, by Hoeffding's inequality, with $N_s$ samples from the Hadamard test, one can ensure that
  \begin{equation}
	\PP\left(\abs{Z_j - \braket{\psi|U^{2^j}|\psi}} < \beta(\delta, \xi)\right) > 1-\frac{\eta}{J+1}
  \end{equation}
  for every $j$. Thus, with the union bound, one arrives at
  \begin{equation}
	\PP\left(\abs{Z_j - \braket{\psi|U^{2^j}|\psi}} < \beta(\delta, \xi) \text{ hold for every } j = 0,1,\ldots,J\right) > 1-\eta.
  \end{equation}
\end{proof}

\begin{rmk}
  It is worth noticing that when reducing $\tm$ to $\mo{\xi\eps^{-1}}$, $\tt$ will increase to $\tmo{\xi^{-1}\eps^{-1}}$. Therefore, a trade-off similar to \cite{ding2022even} exists between $\tt$ and $\tm$. In particular, by using a small prefactor $\xi=\Theta(\delta)$ in the large overlap regime, one reduces $\tm$ to $\mo{\delta\eps^{-1}}$ (while increasing the total runtime to $\tmo{\delta^{-1}\eps^{-1}}$). According to \Cref{rmk:smalldelta}, \Cref{alg:new} reduces the maximal circuit length by a factor of $\Theta(1/\delta)$, while the result presented in \cite{ding2022even} only gives a factor of $\Theta(1/\sqrt{\delta})$. After the initial submission of our manuscript, an extended version of QCELS is shown in \cite{ding2023simultaneous} to achieve the factor $\Theta(1/{\delta})$ with a more sophisticated analysis. 
\end{rmk}

\begin{rmk}
The result in \Cref{alg:new} can be extended to the case with large relative overlap (Cf. \cite{lin2022heisenberg, ding2022even}) without increasing the depth of the quantum circuit. The filtering process and proof in \cite{ding2022even} can be directly applied here. Briefly speaking, if one knows a priori that $\lambda_0\in I\subset I'\subset [-\pi, \pi]$, then one only needs $\frac{p_0}{\sum_{\lambda\in I'}p_m}>1-\delta$ instead of $p_0>1-\delta$. The idea is to replace $Z_j$, which is an estimator of $\braket{\psi|U^{2^j}|\psi}=\sum_{m=0}^{M-1}p_m2^j\lambda_m$, with an estimator of 
\[
\braket{\psi|U^{2^j}F_q(H)|\psi}\approx\braket{\psi|U^{2^j}\mathbf{1}_{I'}(H)|\psi}=\sum_{\lambda_m\in I'}p_m2^j\lambda_m.
\]
Here $H$ is a matrix such that $U=e^{\mathrm{i} H}$ and $F_q$ is an approximation of the indicator $\mathbf{1}_{I'}$ up to precision $q$, which is obtained by a truncation of the Fourier series of $\mathbf{1}_{I'}$. The estimator of $\braket{\psi|U^{2^j}F_q(H)|\psi}$ is then constructed with the results of Hadamard tests and the coefficients of the Fourier series. 
\end{rmk}

{
\begin{figure}[ht!]
  \centering
  \begin{tabular}{cc}
    \includegraphics[width =0.45\textwidth]{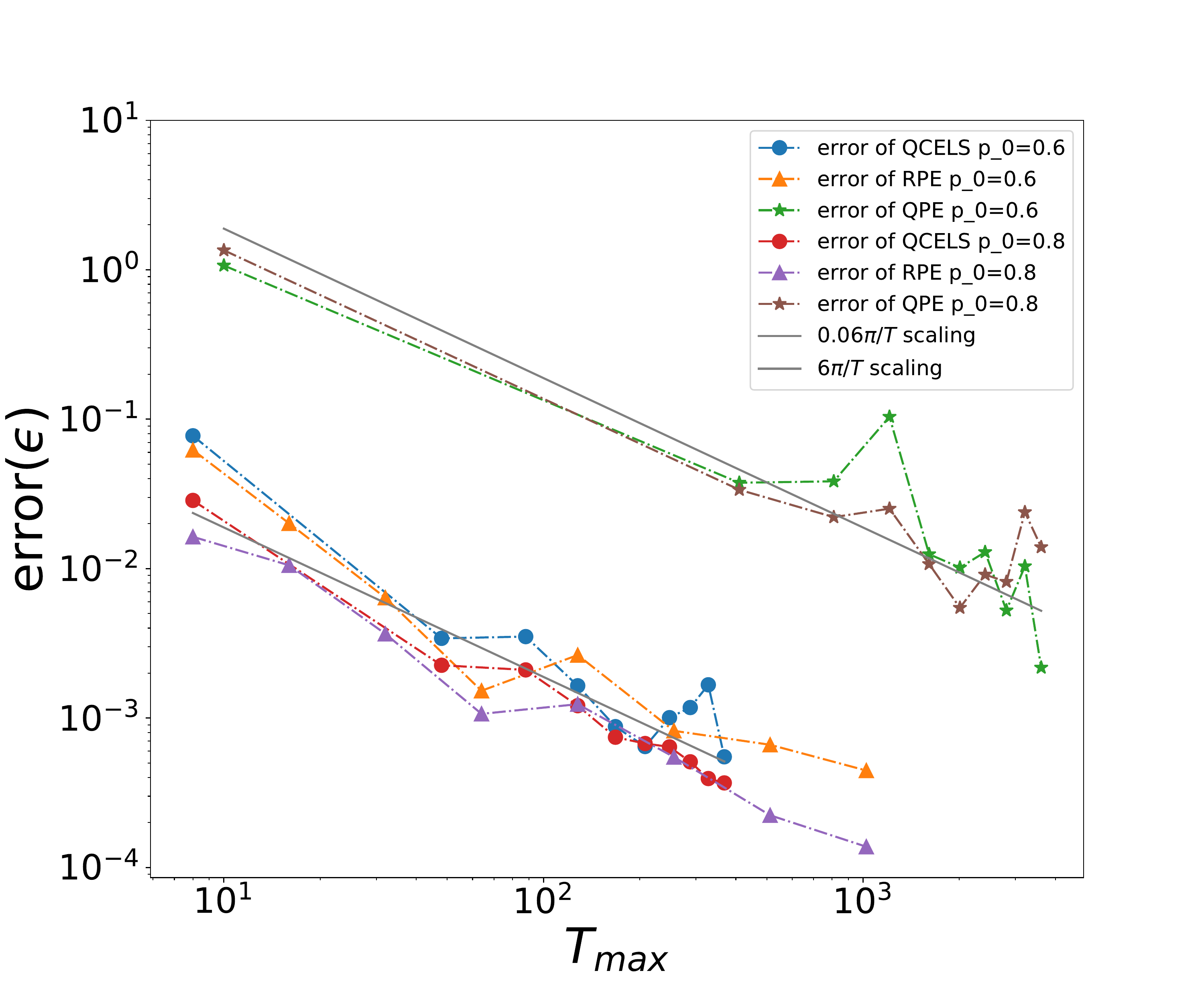}&
    \includegraphics[width =0.45\textwidth]{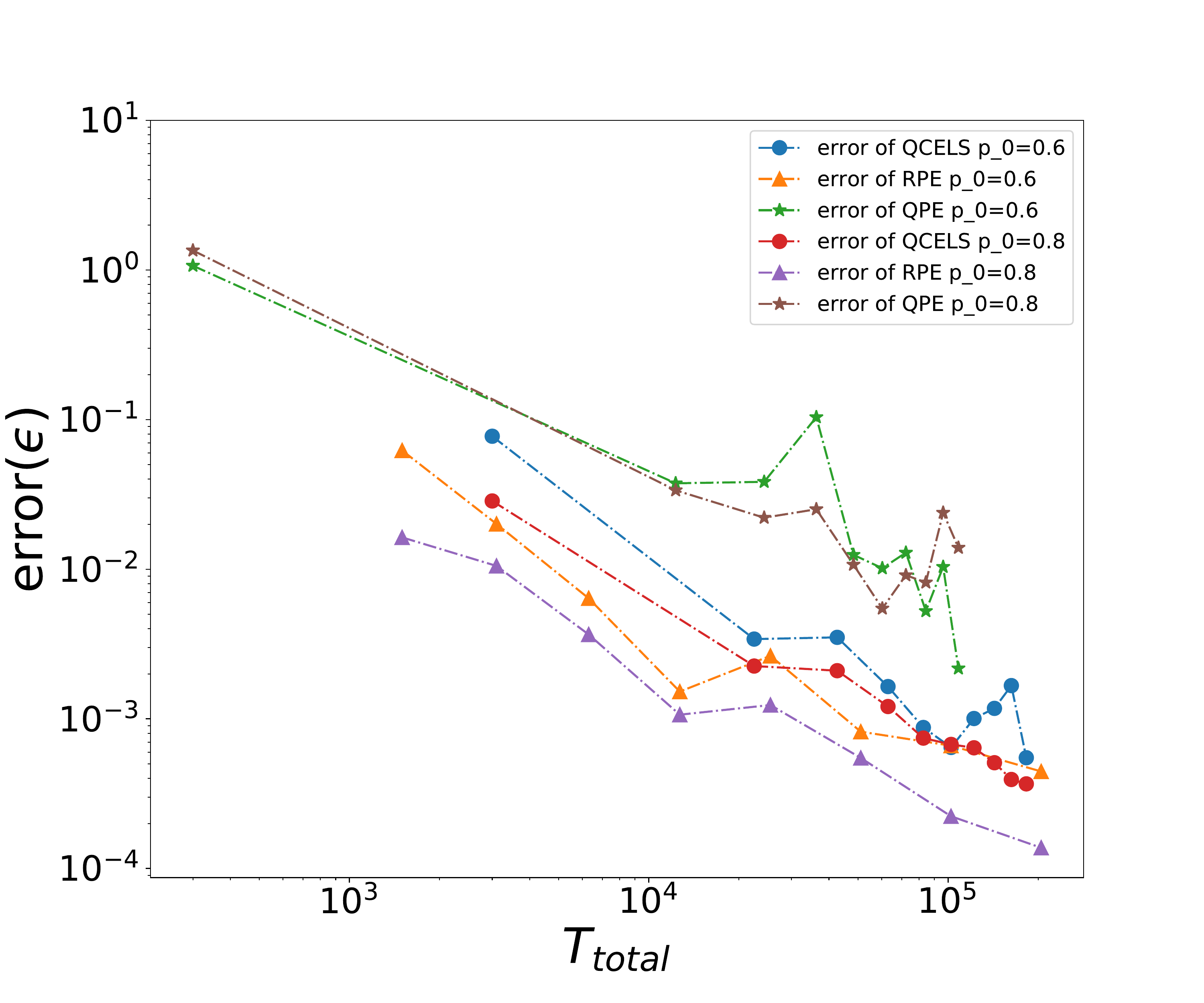}\\
    (a) & (b)
  \end{tabular}
  \caption{Numerical simulations for the transverse field Ising model using different phase estimation methods. RPE, QCELS, and QPE denote our method, the optimization-based method in \cite{ding2022even}, and textbook version QPE, respectively. The initial overlap $p_0$ is chosen to be 0.6 and 0.8. (a) Comparison of the maximal runtime $\tm$. (b) Comparison of the total runtime $\tt$. }
  \label{fig:compare}
\end{figure}

\section{Numerical simulation}

In this section, we present the results of numerical simulations of our algorithms and compare our method to other phase estimation algorithms. The unitary operator we employ is defined as $U = e^{\mathrm{i}\frac{\pi}{4} H/\|H\|_2}$, where $H$ is a Hamiltonian and $\|\cdot\|_2$ represents the operator $2$-norm. The scaling factor $\frac{\pi}{4\|H\|_2}$ in the exponential is used to ensure that the eigenvalues are all in $[-\frac{\pi}{4},\frac{\pi}{4}]$, thereby eliminating any ambiguity arising from modulo $2\pi$. 

Specifically, we employ the one-dimensional transverse field Ising model with $L$ sites and periodic boundary conditions as the Hamiltonian, which is given by} 
{
\[
H=-\left(\sum_{i=1}^{L-1} Z_i Z_{i+1}+Z_L Z_1\right)-g \sum_{i=1}^L X_i,
\]
with parameters $L=8$ and $g=4$. Here, $X_i$ and $Z_i$ represent the Pauli matrices acting on the $i$-th site.

In \Cref{fig:compare}, we present the performance of our RPE algorithm (\Cref{alg:pruning}) and compare it to two other algorithms, namely the QCELS algorithm (the optimization-based method in \cite{ding2022even}) and the textbook version QPE \cite{cleve1998quantum}, with two different initial overlap. The parameters for QCELS are set identically to those in \cite{ding2022even}. All data for these three methods are obtained by conducting ten random experiments and calculating the average error. As demonstrated in \Cref{fig:compare}(a), the error given by RPE and QCELS decreases when increasing $\tm$ with a linear trend for both $p_0=0.6$ and $p_0=0.8$, and both RPE and QCELS provide a much smaller prefactor than textbook QPE. On the other hand, it is clear from \Cref{fig:compare}(b) that \Cref{alg:pruning} achieves a lower total cost $\tt$ compared to QCELS. 

\begin{figure}[ht!]
  \centering
  \begin{tabular}{cc}
    \includegraphics[width =0.45\textwidth]{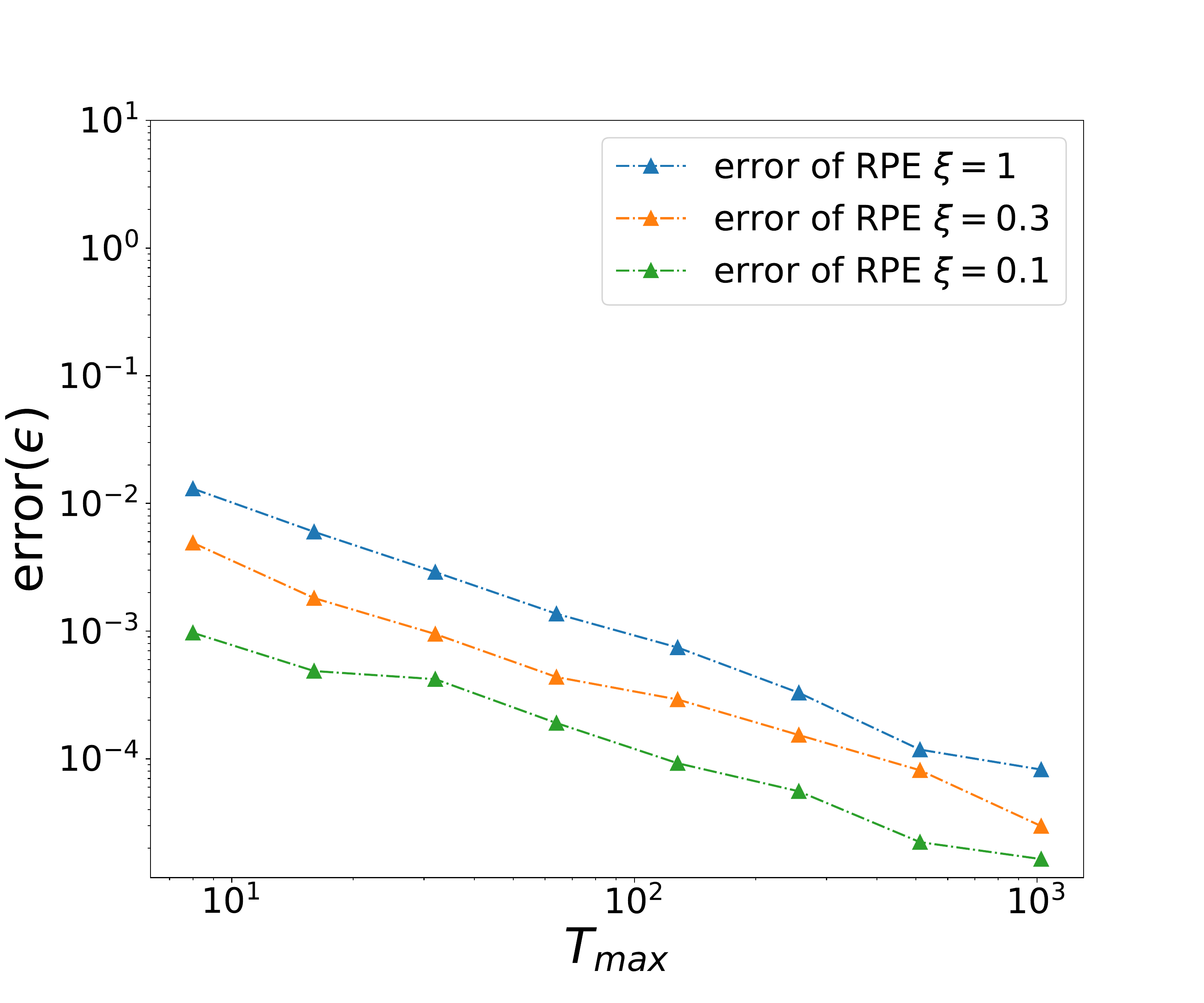} &
    \includegraphics[width =0.45\textwidth]{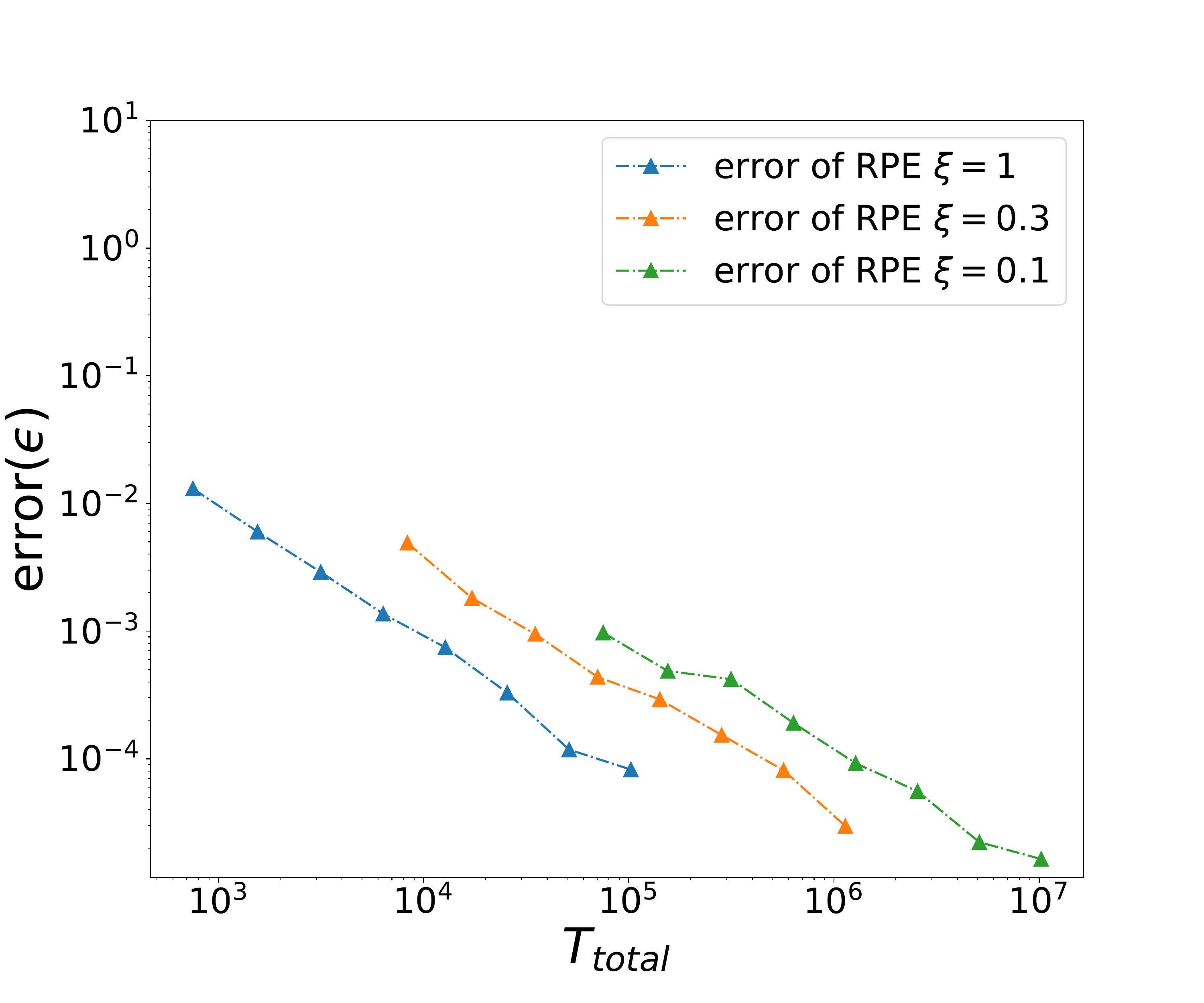}\\
    (a) & (b)
  \end{tabular}
  \caption{Numerical simulation for transverse field Ising model with different $\xi$'s. The initial overlap $p_0$ is chosen to be 0.99. The logarithmic scale is used for both the vertical and the horizontal axes. (a) Comparison of maximal runtime. (b) Comparison of total runtime.  }
  \label{fig:diff_xi}
\end{figure}

In the second numerical experiment, we consider the same unitary operator $U$ but assume that the initial overlap is sufficiently large, and we proceed to verify that \Cref{alg:new} can further reduce the prefactor in $\tm$ by lowering the value of $\xi$ in this regime. In particular, we assume that the initial overlap is $p_0=0.99$. Then $\xi$ can be as small as $\frac{3}{\pi}\arcsin(0.01/0.99)\approx0.01$ according to \Cref{thm:2}. \Cref{fig:diff_xi} displays the results for \Cref{alg:new} with different values of $\xi$, where the error is also averaged from $10$ random experiments. For the three different values $\xi = 1$, $0.3$, and $0.1$, the error shows a similar linear decreasing trend while the maximal runtime and the total runtime increase, which indicates that the difference only lies in the prefactor. Moreover, the prefactor of $\tm$ decreases as $\xi$ decreases, at the expense of an increase in $\tt$, which verifies the conclusions of \Cref{thm:2}. 
}

\section{Conclusion and discussion}
In this paper, we have demonstrated through theoretical analyses and numerical experiments that a simple RPE-type algorithm and its variant can be particularly suitable for the implementation of phase estimation on early fault-tolerant quantum computers since they satisfy the requirements (\ref{1}), (\ref{2}), (\ref{3}) and (\ref{4}). Compared with the previous work \cite{ding2022even}, our method is structurally simpler since no optimization procedure is needed. Our method also provides a more significant prefactor reduction while at the same time posing a looser requirement for the initial overlap. 

For the theoretical results presented in this paper, a large probability statement is adopted. One can also easily extend the results to the other metrics, such as the mean squared error (MSE), where a different number of measurements can be used in each iteration to minimize the MSE. 

The unitary $U$ is assumed to be a black-box unitary in the setting of this paper. When other powers of $U$ apart from $U^{2^j}$ are accessible, it is possible to relax further the requirement $\delta < 2\sqrt{3}-3$ (see the follow-up work \cite{li2023low} for details). The fact that $\delta$ only needs to be smaller than an $\mo{1}$ threshold makes the algorithm presented here specifically suitable for the case where the unitary comes from a simulation of a Hamiltonian of interest. In that case, the result shown here only requires an approximate simulation with $\mo{1}$ precision for each time duration $2^j$, thus making the algorithm particularly advantageous for the combination with Hamiltonian simulation algorithms. 

\bibliographystyle{abbrvnat}
\bibliography{ref}

\begin{thebibliography}{30}
\providecommand{\natexlab}[1]{#1}
\providecommand{\url}[1]{\texttt{#1}}
\expandafter\ifx\csname urlstyle\endcsname\relax
  \providecommand{\doi}[1]{doi: #1}\else
  \providecommand{\doi}{doi: \begingroup \urlstyle{rm}\Url}\fi

\bibitem[Aharonov and Naveh(2002)]{aharonov2002quantum}
D.~Aharonov and T.~Naveh.
\newblock Quantum {NP}-a survey.
\newblock \emph{arXiv preprint quant-ph/0210077}, 2002.
\newblock \doi{https://doi.org/10.48550/arXiv.quant-ph/0210077}.

\bibitem[Belliardo and Giovannetti(2020)]{belliardo2020achieving}
F.~Belliardo and V.~Giovannetti.
\newblock Achieving {H}eisenberg scaling with maximally entangled states: An analytic upper bound for the attainable root-mean-square error.
\newblock \emph{Physical Review A}, 102\penalty0 (4):\penalty0 042613, 2020.
\newblock \doi{https://doi.org/10.1103/PhysRevA.102.042613}.

\bibitem[Berry et~al.(2015)Berry, Childs, Cleve, Kothari, and Somma]{berry2015simulating}
D.~W. Berry, A.~M. Childs, R.~Cleve, R.~Kothari, and R.~D. Somma.
\newblock Simulating {H}amiltonian dynamics with a truncated {T}aylor series.
\newblock \emph{Physical review letters}, 114\penalty0 (9):\penalty0 090502, 2015.
\newblock \doi{https://doi.org/10.1103/PhysRevLett.114.090502}.

\bibitem[Cleve et~al.(1998)Cleve, Ekert, Macchiavello, and Mosca]{cleve1998quantum}
R.~Cleve, A.~Ekert, C.~Macchiavello, and M.~Mosca.
\newblock Quantum algorithms revisited.
\newblock \emph{Proceedings of the Royal Society of London. Series A: Mathematical, Physical and Engineering Sciences}, 454\penalty0 (1969):\penalty0 339--354, 1998.
\newblock \doi{https://doi.org/10.1098/rspa.1998.0164}.

\bibitem[Ding and Lin(2023{\natexlab{a}})]{ding2022even}
Z.~Ding and L.~Lin.
\newblock Even shorter quantum circuit for phase estimation on early fault-tolerant quantum computers with applications to ground-state energy estimation.
\newblock \emph{PRX Quantum}, 4\penalty0 (2):\penalty0 020331, 2023{\natexlab{a}}.
\newblock \doi{https://doi.org/10.1103/PRXQuantum.4.020331}.

\bibitem[Ding and Lin(2023{\natexlab{b}})]{ding2023simultaneous}
Z.~Ding and L.~Lin.
\newblock Simultaneous estimation of multiple eigenvalues with short-depth quantum circuit on early fault-tolerant quantum computers.
\newblock \emph{Quantum}, 7:\penalty0 1136, 2023{\natexlab{b}}.
\newblock \doi{https://doi.org/10.22331/q-2023-10-11-1136}.

\bibitem[Dong et~al.(2022)Dong, Lin, and Tong]{dong2022ground}
Y.~Dong, L.~Lin, and Y.~Tong.
\newblock Ground-state preparation and energy estimation on early fault-tolerant quantum computers via quantum eigenvalue transformation of unitary matrices.
\newblock \emph{PRX Quantum}, 3\penalty0 (4):\penalty0 040305, 2022.
\newblock \doi{https://doi.org/10.1103/PRXQuantum.3.040305}.

\bibitem[Giovannetti et~al.(2006)Giovannetti, Lloyd, and Maccone]{giovannetti2006quantum}
V.~Giovannetti, S.~Lloyd, and L.~Maccone.
\newblock Quantum metrology.
\newblock \emph{Physical review letters}, 96\penalty0 (1):\penalty0 010401, 2006.
\newblock \doi{https://doi.org/10.1103/PhysRevLett.96.010401}.

\bibitem[Higgins et~al.(2007)Higgins, Berry, Bartlett, Wiseman, and Pryde]{higgins2007entanglement}
B.~L. Higgins, D.~W. Berry, S.~D. Bartlett, H.~M. Wiseman, and G.~J. Pryde.
\newblock Entanglement-free {H}eisenberg-limited phase estimation.
\newblock \emph{Nature}, 450\penalty0 (7168):\penalty0 393--396, 2007.
\newblock \doi{https://doi.org/10.1038/nature06257}.

\bibitem[Huang et~al.(2023)Huang, Tong, Fang, and Su]{huang2023learning}
H.-Y. Huang, Y.~Tong, D.~Fang, and Y.~Su.
\newblock Learning many-body hamiltonians with heisenberg-limited scaling.
\newblock \emph{Physical Review Letters}, 130\penalty0 (20):\penalty0 200403, 2023.
\newblock \doi{https://doi.org/10.1103/PhysRevLett.130.200403}.

\bibitem[Kempe et~al.(2006)Kempe, Kitaev, and Regev]{kempe2006complexity}
J.~Kempe, A.~Kitaev, and O.~Regev.
\newblock The complexity of the local {H}amiltonian problem.
\newblock \emph{Siam journal on computing}, 35\penalty0 (5):\penalty0 1070--1097, 2006.
\newblock \doi{https://doi.org/10.1137/S0097539704445226}.

\bibitem[Kimmel et~al.(2015)Kimmel, Low, and Yoder]{kimmel2015robust}
S.~Kimmel, G.~H. Low, and T.~J. Yoder.
\newblock Robust calibration of a universal single-qubit gate set via robust phase estimation.
\newblock \emph{Physical Review A}, 92\penalty0 (6):\penalty0 062315, 2015.
\newblock \doi{https://doi.org/10.1103/PhysRevA.92.062315}.

\bibitem[Kitaev(1995)]{kitaev1995quantum}
A.~Y. Kitaev.
\newblock Quantum measurements and the abelian stabilizer problem.
\newblock \emph{arXiv preprint quant-ph/9511026}, 1995.
\newblock \doi{https://doi.org/10.48550/arXiv.quant-ph/9511026}.

\bibitem[Kitaev et~al.(2002)Kitaev, Shen, Vyalyi, and Vyalyi]{kitaev2002classical}
A.~Y. Kitaev, A.~Shen, M.~N. Vyalyi, and M.~N. Vyalyi.
\newblock \emph{Classical and quantum computation}.
\newblock American Mathematical Soc., 2002.
\newblock \doi{http://dx.doi.org/10.1090/gsm/047}.

\bibitem[Knill et~al.(2007)Knill, Ortiz, and Somma]{knill2007optimal}
E.~Knill, G.~Ortiz, and R.~D. Somma.
\newblock Optimal quantum measurements of expectation values of observables.
\newblock \emph{Physical Review A}, 75\penalty0 (1):\penalty0 012328, 2007.
\newblock \doi{https://doi.org/10.1103/PhysRevA.75.012328}.

\bibitem[Li et~al.(2023)Li, Ni, and Ying]{li2023low}
H.~Li, H.~Ni, and L.~Ying.
\newblock On low-depth quantum algorithms for robust multiple-phase estimation.
\newblock \emph{arXiv preprint arXiv:2303.08099}, 2023.
\newblock \doi{https://doi.org/10.48550/arXiv.2303.08099}.

\bibitem[Lin and Tong(2020)]{lin2020near}
L.~Lin and Y.~Tong.
\newblock Near-optimal ground state preparation.
\newblock \emph{Quantum}, 4:\penalty0 372, 2020.
\newblock \doi{https://doi.org/10.22331/q-2020-12-14-372}.

\bibitem[Lin and Tong(2022)]{lin2022heisenberg}
L.~Lin and Y.~Tong.
\newblock {H}eisenberg-limited ground-state energy estimation for early fault-tolerant quantum computers.
\newblock \emph{PRX Quantum}, 3\penalty0 (1):\penalty0 010318, 2022.
\newblock \doi{https://doi.org/10.1103/PRXQuantum.3.010318}.

\bibitem[Lumino et~al.(2018)Lumino, Polino, Rab, Milani, Spagnolo, Wiebe, and Sciarrino]{lumino2018experimental}
A.~Lumino, E.~Polino, A.~S. Rab, G.~Milani, N.~Spagnolo, N.~Wiebe, and F.~Sciarrino.
\newblock Experimental phase estimation enhanced by machine learning.
\newblock \emph{Physical Review Applied}, 10\penalty0 (4):\penalty0 044033, 2018.
\newblock \doi{https://doi.org/10.1103/PhysRevApplied.10.044033}.

\bibitem[Nielsen and Chuang(2000)]{nielsen2001quantum}
M.~A. Nielsen and I.~L. Chuang.
\newblock \emph{Quantum Computation and Quantum Information}.
\newblock Cambridge University Press, 2000.
\newblock \doi{http://dx.doi.org/10.1017/CBO9780511976667}.

\bibitem[O’Brien et~al.(2019)O’Brien, Tarasinski, and Terhal]{o2019quantum}
T.~E. O’Brien, B.~Tarasinski, and B.~M. Terhal.
\newblock Quantum phase estimation of multiple eigenvalues for small-scale (noisy) experiments.
\newblock \emph{New Journal of Physics}, 21\penalty0 (2):\penalty0 023022, 2019.
\newblock \doi{10.1088/1367-2630/aafb8e}.

\bibitem[Poulin and Wocjan(2009)]{poulin2009sampling}
D.~Poulin and P.~Wocjan.
\newblock Sampling from the thermal quantum gibbs state and evaluating partition functions with a quantum computer.
\newblock \emph{Physical review letters}, 103\penalty0 (22):\penalty0 220502, 2009.
\newblock \doi{https://doi.org/10.1103/PhysRevLett.103.220502}.

\bibitem[Rudinger et~al.(2017)Rudinger, Kimmel, Lobser, and Maunz]{rudinger2017experimental}
K.~Rudinger, S.~Kimmel, D.~Lobser, and P.~Maunz.
\newblock Experimental demonstration of a cheap and accurate phase estimation.
\newblock \emph{Physical review letters}, 118\penalty0 (19):\penalty0 190502, 2017.
\newblock \doi{https://doi.org/10.1103/PhysRevLett.118.190502}.

\bibitem[Russo et~al.(2021)Russo, Rudinger, Morrison, and Baczewski]{russo2021evaluating}
A.~E. Russo, K.~M. Rudinger, B.~C. Morrison, and A.~D. Baczewski.
\newblock Evaluating energy differences on a quantum computer with robust phase estimation.
\newblock \emph{Physical review letters}, 126\penalty0 (21):\penalty0 210501, 2021.
\newblock \doi{https://doi.org/10.1103/PhysRevLett.126.210501}.

\bibitem[Tong(2021)]{yu2021}
Y.~Tong.
\newblock A tight query complexity lower bound for phase estimation under circuit depth constraint, 2021.
\newblock URL \url{https://math.berkeley.edu/~yu_tong/lower_bound_low_depth_phase_est.pdf}.

\bibitem[Wan et~al.(2022)Wan, Berta, and Campbell]{wan2022randomized}
K.~Wan, M.~Berta, and E.~T. Campbell.
\newblock Randomized quantum algorithm for statistical phase estimation.
\newblock \emph{Physical Review Letters}, 129\penalty0 (3):\penalty0 030503, 2022.
\newblock \doi{https://doi.org/10.1103/PhysRevLett.129.030503}.

\bibitem[Wang et~al.(2022)Wang, Stilck-Fran{\c{c}}a, Zhang, Zhu, and Johnson]{wang2022quantum}
G.~Wang, D.~Stilck-Fran{\c{c}}a, R.~Zhang, S.~Zhu, and P.~D. Johnson.
\newblock Quantum algorithm for ground state energy estimation using circuit depth with exponentially improved dependence on precision.
\newblock \emph{arXiv preprint arXiv:2209.06811}, 2022.
\newblock \doi{https://doi.org/10.48550/arXiv.2209.06811}.

\bibitem[Zhang et~al.(2022)Zhang, Wang, and Johnson]{zhang2022computing}
R.~Zhang, G.~Wang, and P.~Johnson.
\newblock Computing ground state properties with early fault-tolerant quantum computers.
\newblock \emph{Quantum}, 6:\penalty0 761, 2022.
\newblock \doi{https://doi.org/10.22331/q-2022-07-11-761}.

\bibitem[Zhou et~al.(2018)Zhou, Zhang, Preskill, and Jiang]{zhou2018achieving}
S.~Zhou, M.~Zhang, J.~Preskill, and L.~Jiang.
\newblock Achieving the {H}eisenberg limit in quantum metrology using quantum error correction.
\newblock \emph{Nature communications}, 9\penalty0 (1):\penalty0 78, 2018.
\newblock \doi{https://doi.org/10.1038/s41467-017-02510-3}.

\bibitem[Zwierz et~al.(2010)Zwierz, P{\'e}rez-Delgado, and Kok]{zwierz2010general}
M.~Zwierz, C.~A. P{\'e}rez-Delgado, and P.~Kok.
\newblock General optimality of the {H}eisenberg limit for quantum metrology.
\newblock \emph{Physical review letters}, 105\penalty0 (18):\penalty0 180402, 2010.
\newblock \doi{https://doi.org/10.1103/PhysRevLett.105.180402}.

\end{thebibliography}

\end{document}